\begin{document}

\bibliographystyle{apsrev4-2}

\title{Efficient Algorithms for Weakly-Interacting Quantum Spin Systems}

\author{Ryan L. Mann}
\email{mail@ryanmann.org}
\homepage{http://www.ryanmann.org}
\affiliation{Centre for Quantum Computation and Communication Technology, Centre for Quantum Software and Information, School of Computer Science, Faculty of Engineering \& Information Technology, University of Technology Sydney, NSW 2007, Australia}

\author{Gabriel Waite}
\email{gabriel.waite@student.uts.edu.au}
\affiliation{Centre for Quantum Computation and Communication Technology, Centre for Quantum Software and Information, School of Computer Science, Faculty of Engineering \& Information Technology, University of Technology Sydney, NSW 2007, Australia}

\begin{abstract}
    We establish efficient algorithms for weakly-interacting quantum spin systems at arbitrary temperature. In particular, we obtain a fully polynomial-time approximation scheme for the partition function and an efficient approximate sampling scheme for the thermal distribution over a classical spin space. Our approach is based on the cluster expansion method and a standard reduction from approximate sampling to approximate counting.
\end{abstract}

\maketitle

\section{Introduction}
\label{section:Introduction}

The cluster expansion provides a powerful tool for developing approximation algorithms for statistical mechanical systems. This method has been successfully applied to obtain efficient algorithms for several models, including the hardcore model~\cite{helmuth2020algorithmic, jenssen2020algorithms, chen2019fast, cannon2020counting, jenssen2020independent, jenssen2023approximately, galvin2024zeroes, collares2025counting}, the Potts model~\cite{helmuth2020algorithmic, chen2019fast, borgs2020efficient, helmuth2023finite, carlson2024algorithms}, and quantum spin systems~\cite{mann2021efficient, helmuth2023efficient, mann2024algorithmic}. For quantum spin systems, efficient algorithms have been established at high temperature~\cite{mann2021efficient, mann2024algorithmic} and at low temperature for stable quantum perturbations of classical spin systems~\cite{mann2024algorithmic}.

In this paper, we apply this method to establish efficient algorithms for weakly-interacting quantum spin systems at arbitrary temperature. Our main results are a fully polynomial-time approximation scheme for the partition function and an efficient approximate sampling scheme for the thermal distribution over a classical spin space. Our algorithms are a natural extension of algorithms for high-temperature systems to weakly-interacting systems at arbitrary temperature.

Our approach is based on the algorithmic cluster expansion methods of Helmuth, Perkins, and Regts~\cite{helmuth2020algorithmic} and Borgs et al.~\cite{borgs2020efficient} for bounded-degree graphs. We apply their approach in the setting of bounded-degree bounded-rank multihypergraphs using the framework of Ref.~\cite{mann2024algorithmic}. That is, we formulate the partition function as an abstract polymer model following Koteck\'y and Preiss~\cite{kotecky1986cluster}. When the polymer weights satisfy a suitable decay condition, the cluster expansion provides a convergent power series representation for the logarithm of the partition function, which can be truncated to obtain efficient approximation algorithms. We obtain the sampling algorithm via a standard reduction from approximate sampling to approximate counting based on the chain rule for conditional probabilities. We note that our results extend naturally to fermionic systems; however, we focus on spin systems for clarity of presentation. We also note that these methods can be used to approximate expectation values of local observables, which can be obtained from derivatives of the partition function.

Our results complement classical algorithms for approximating the ground state energy and correlation functions of weakly-interacting quantum spin systems~\cite{bravyi2008polynomial} and efficient algorithms for approximating partition functions of weakly-interacting fermionic systems~\cite{chen2025convergence}. Our results further complement quantum algorithms for rapid mixing of thermal samplers and ground state preparation for weakly-interacting quantum systems~\cite{tong2025fast, smid2025polynomial, zhan2026rapid, smid2025rapid}. In particular, our results provide an efficient classical algorithm in the setting of Ref.~\cite{smid2025rapid}.

This paper is structured as follows. In Section~\ref{section:Preliminaries}, we introduce the necessary preliminaries. Then, in Section~\ref{section:ApproximateComputationOfThePartitionFunction}, we establish our algorithm for approximating the partition function. In Section~\ref{section:ApproximateSamplingFromTheThermalDistribution}, we establish our algorithm for approximately sampling from the thermal distribution over a classical spin space. Finally, we conclude in Section~\ref{section:ConclusionAndOutlook} with some remarks and open problems.

\section{Preliminaries}
\label{section:Preliminaries}

\subsection{Graph Theory}
\label{section:GraphTheory}

A \emph{multihypergraph} is a generalisation of a graph in which multiple edges between vertices and edges between any number of vertices are allowed. We consider multihypergraphs with uniquely labelled edges. Let $G=(V, E)$ be a multihypergraph. We denote the \emph{order} of $G$ by $\abs{G}\coloneqq\abs{V(G)}$ and the \emph{size} of $G$ by $\norm{G}\coloneqq\abs{E(G)}$. The \emph{maximum degree} of $G$ is the maximum degree over all vertices and the \emph{rank} of $G$ is the maximum cardinality over all edges.

\subsection{Quantum Spin Systems}
\label{section:QuantumSpinSystems}

A \emph{quantum spin system} is modelled by a multihypergraph $G=(V, E)$. At each vertex $v$ of $G$, there is a $d$-dimensional Hilbert space $\mathcal{H}_v$ with $d<\infty$. The Hilbert space on the multihypergraph is given by $\mathcal{H}_G\coloneqq\bigotimes_{v \in V}\mathcal{H}_v$. We consider Hamiltonians of the form $H_\Phi + \lambda H_\Psi$, where $H_\Phi$ is non-interacting, $H_\Psi$ is a local perturbation, and $\lambda\in\mathbb{C}$ is a parameter. The non-interacting Hamiltonian is defined by $H_\Phi\coloneqq\sum_{v \in V}\Phi_v$, where $\Phi$ assigns a self-adjoint operator $\Phi_v$ on $\mathcal{H}_v$ to each vertex $v$ of $G$. The local perturbation Hamiltonian is defined by $H_\Psi\coloneqq\sum_{e \in E}\Psi_e$, where $\Psi$ assigns a self-adjoint operator $\Psi_e$ on $\bigotimes_{v \in e}\mathcal{H}_v$ to each edge $e$ of $G$. At inverse temperature $\beta$, the \emph{partition function} $Z_G(\beta,\lambda)$ is defined by $Z_G(\beta,\lambda)\coloneqq\Tr\left[e^{-\beta(H_\Phi + \lambda H_\Psi)}\right]$, and the \emph{thermal state} $\rho_G(\beta,\lambda)$ is defined by $\rho_G(\beta,\lambda)\coloneqq\left(Z_G(\beta,\lambda)\right)^{-1}e^{-\beta(H_\Phi + \lambda H_\Psi)}$. The \emph{thermal distribution} $\mu_{\rho_G(\beta,\lambda)}$ over the classical spin space $[d]^V$ is defined by $\mu_{\rho_G(\beta,\lambda)}(x)\coloneqq\expval{\rho_G(\beta,\lambda)}{x}$ for all $x\in[d]^V$.

We shall restrict our attention to quantum spin systems modelled by bounded-degree bounded-rank multihypergraphs. We shall assume that the trace is normalised so that $\Tr(\mathbb{I})=1$, which is equivalent to rescaling the partition function by a multiplicative factor. Further, we shall assume that $\norm{\Phi_v}\leq1$ for every $v \in V$ and that $\norm{\Psi_e}\leq1$ for every $e \in E$, where $\norm{\;\cdot\;}$ denotes the operator norm. Note that this is always possible by a rescaling of $\beta$ and $\lambda$.

\subsection{Abstract Polymer Models}
\label{section:AbstractPolymerModels}

An \emph{abstract polymer model} is a triple $(\mathcal{C},w,\sim)$, where $\mathcal{C}$ is a countable set whose elements are called polymers, $w:\mathcal{C}\to\mathbb{C}$ is a function that assigns a \emph{weight} $w_\gamma\in\mathbb{C}$ to each polymer $\gamma\in\mathcal{C}$, and $\sim$ is a \emph{symmetric compatibility relation} such that each polymer is incompatible with itself. A set of polymers is called \emph{admissible} if all the polymers in the set are pairwise compatible. Note that the empty set is admissible. The \emph{abstract polymer partition function} $Z(\mathcal{C},w)$ is defined by
\begin{equation}
    Z(\mathcal{C},w) \coloneqq \sum_{\Gamma\in\mathcal{G}}\prod_{\gamma\in\Gamma}w_\gamma, \notag
\end{equation}
where the sum is over all admissible sets of polymers $\mathcal{G}$ from $\mathcal{C}$. We consider abstract polymer models in which the polymers are connected subgraphs of bounded-degree bounded-rank multihypergraphs and compatibility is defined by vertex disjointness. Accordingly, for a polymer $\gamma$, we denote its order by $\abs{\gamma}\coloneqq\abs{V(\gamma)}$ and its size by $\norm{\gamma}\coloneqq\norm{E(\gamma)}$.

\subsection{Abstract Cluster Expansion}
\label{section:AbstractClusterExpansion}

We now define the \emph{abstract cluster expansion}~\cite{kotecky1986cluster, friedli2017statistical}. Let $\Gamma$ be a non-empty ordered tuple of polymers. The \emph{incompatibility graph} $H_\Gamma$ is the graph whose vertex set is $\Gamma$ and has an edge between vertices $\gamma$ and $\gamma'$ if and only if they are incompatible. $\Gamma$ is called a \emph{cluster} if its incompatibility graph $H_\Gamma$ is connected. Let $\mathcal{G}_C$ denote the set of all clusters of polymers from $\mathcal{C}$. The \emph{abstract cluster expansion} is a formal power series for $\log(Z(\mathcal{C},w))$ in the variables $w_\gamma$, defined by
\begin{equation}
    \log(Z(\mathcal{C},w)) \coloneqq \sum_{\Gamma\in\mathcal{G}_C}\varphi(H_\Gamma)\prod_{\gamma\in\Gamma}w_\gamma, \notag
\end{equation}
where $\varphi(\;\cdot\;)$ denotes the \emph{Ursell function}, defined by
\begin{equation}
    \varphi(H) \coloneqq \frac{1}{\abs{H}!}\sum_{\substack{S \subseteq E(H) \\ \text{spanning} \\ \text{connected}}}(-1)^\abs{S}. \notag
\end{equation}
Our approach is based on representing the partition function of a quantum spin system as an abstract polymer model and applying the cluster expansion to approximate the logarithm of the partition function.

\subsection{Approximate Counting \& Sampling}
\label{section:ApproximateCountingSampling}

We now introduce some important concepts in approximate counting and sampling. A \emph{fully polynomial-time approximation scheme} for a sequence of complex numbers $(z_n)_{n\in\mathbb{N}}$ is a deterministic algorithm that, for any $n$ and $\epsilon>0$, outputs a complex number $\hat{z}_n$ such that $\abs{z_n-\hat{z}_n}\leq\epsilon\abs{z_n}$ in time polynomial in $n$ and $1/\epsilon$. An \emph{efficient approximate sampling scheme} for a sequence of probability distributions $(\mu_n)_{n\in\mathbb{N}}$ is a randomised algorithm that, for any $n$ and $\epsilon>0$, outputs a sample from a distribution $\hat{\mu}_n$ such that $\norm{\mu_n-\hat{\mu}_n}_\text{TV}\leq\epsilon$ in time polynomial in $n$ and $1/\epsilon$.

\section{Approximate Computation of the Partition Function}
\label{section:ApproximateComputationOfThePartitionFunction}

In this section we establish an efficient approximation scheme for the partition function of weakly-interacting quantum spin systems at arbitrary temperature. Our approach is based on the algorithmic approach to cluster expansions due to Helmuth, Perkins, and Regts~\cite{helmuth2020algorithmic} and Borgs et al.~\cite{borgs2020efficient} for bounded-degree graphs. We apply their approach in the setting of bounded-degree bounded-rank multihypergraphs using the framework of Ref.~\cite{mann2024algorithmic}. The resulting algorithm is similar to those for quantum partition functions at high temperature~\cite{mann2021efficient, mann2024algorithmic}. Our result is formalised in the following theorem.

\begin{theorem}
    \label{theorem:ApproximationAlgorithmPartitionFunction}
    Fix $\Delta,r\in\mathbb{Z}_{\geq2}$ and $\beta>0$. Let $G=(V, E)$ be a multihypergraph of maximum degree at most $\Delta$ and rank at most $r$, and let $\lambda$ be a complex number such that
    \begin{equation}
        \abs{\lambda} \leq \frac{e^{-2r\beta}}{e^4\beta\Delta\binom{r}{2}}. \notag
    \end{equation}
    Then the cluster expansion for $\log(Z_G(\beta,\lambda))$ converges absolutely, $Z_G(\beta,\lambda)\neq0$, and there is a fully polynomial-time approximation scheme for $Z_G(\beta,\lambda)$.
\end{theorem}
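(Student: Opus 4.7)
The plan is to follow the standard three-step strategy of Helmuth, Perkins, and Regts~\cite{helmuth2020algorithmic} and Borgs et al.~\cite{borgs2020efficient}, adapted to bounded-degree bounded-rank multihypergraphs via the framework of Ref.~\cite{mann2024algorithmic}. First, I would recast $Z_G(\beta,\lambda)/Z_G(\beta,0)$ as an abstract polymer partition function whose polymers are connected subgraphs of $G$ and whose compatibility relation is vertex disjointness. Second, verify the Kotecký-Preiss criterion for the resulting polymer model; this simultaneously yields absolute convergence of the cluster expansion and non-vanishing of $Z_G(\beta,\lambda)$. Third, truncate the cluster expansion and enumerate bounded-size clusters to obtain the FPTAS.

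For the polymer representation, I would Taylor expand $e^{-\beta(H_\Phi+\lambda H_\Psi)}$ in $\lambda$ (equivalently, apply Duhamel's formula), reorganise the expansion by grouping terms according to the connected components of the edge sets they involve, and absorb the purely non-interacting contribution into the reference partition function $Z_G(\beta,0)=\prod_v\Tr[e^{-\beta\Phi_v}]$. This yields
\begin{equation}
Z_G(\beta,\lambda)=Z_G(\beta,0)\sum_{\Gamma\text{ admissible}}\prod_{\gamma\in\Gamma}w_\gamma,\notag
\end{equation}
where each $w_\gamma$ is a trace on $\mathcal{H}_{V(\gamma)}$ collecting all connected contributions supported on the vertex set of the polymer, normalised by the corresponding non-interacting factor, with the product factorising across vertex-disjoint polymers by construction.

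The crux is the weight estimate. Using $\norm{\Phi_v}\leq 1$, $\norm{\Psi_e}\leq 1$, the normalised trace, and elementary bounds on the Taylor expansion of matrix exponentials, I expect a bound of the form $\abs{w_\gamma}\leq(\abs{\lambda}\beta e^{2r\beta})^{\norm{\gamma}}$, in which the factor $e^{2r\beta}$ arises from bounding the interacting Gibbs weights $e^{-\beta H_\Phi|_{V(\gamma)}}$ against the non-interacting reference, with each edge carrying at most two vertex contributions of size $e^{r\beta}$. Combined with the standard tree-type estimate that the number of connected subgraphs of size $s$ containing a fixed vertex in a multihypergraph of maximum degree $\Delta$ and rank $r$ is at most $(e\Delta\binom{r}{2})^s$, the Kotecký-Preiss condition with decoration $a(\gamma)=\norm{\gamma}$ reduces, up to explicit bookkeeping of the constants contributing to the factor $e^4$, to the stated hypothesis on $\abs{\lambda}$. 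Absolute convergence of the cluster expansion then also gives $Z_G(\beta,\lambda)\neq 0$.

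For the algorithm, I would truncate the cluster expansion at clusters of total size at most $m=O(\log(\abs{V}/\epsilon))$; the Kotecký-Preiss tail bound guarantees truncation error at most $\epsilon$ in $\log Z_G(\beta,\lambda)$, and the truncated sum can be evaluated in time polynomial in $\abs{V}$ and $1/\epsilon$ using the standard enumeration of small connected subgraphs of bounded-degree bounded-rank multihypergraphs, as in Refs.~\cite{helmuth2020algorithmic, borgs2020efficient, mann2024algorithmic}. The main obstacle I anticipate is producing the weight bound with exactly the $e^{-2r\beta}$ factor demanded by the hypothesis: this requires carefully separating the interacting and non-interacting traces and cleanly attributing each $e^{\pm r\beta}$ factor to its source, which is precisely the delicate accounting carried out within the framework of Ref.~\cite{mann2024algorithmic}; once the weight bound is in hand, the remainder of the argument is standard.
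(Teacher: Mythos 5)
Your plan follows the paper's three-step strategy (polymer representation, weight bound, truncated enumeration) and gets the key structural ingredients right: the factor $e^{2r\beta}$ from comparing the interacting Gibbs weight against the non-interacting reference, the $(e\Delta\binom{r}{2})^s$ connected-subgraph count, and invoking the abstract polymer model approximation theorem of Ref.~\cite{mann2024algorithmic}. However, there is one substantive deviation that leaves a gap in the algorithmic part. You propose to \emph{define} the polymer weight $w_\gamma$ by grouping connected contributions from the Taylor/Duhamel expansion in $\lambda$, which would make $w_\gamma$ an infinite power series. The paper instead obtains the polymer representation by a \emph{finite} inclusion--exclusion over edge subsets, giving the closed form
\begin{equation}
w_\gamma = \sum_{T \subseteq E(\gamma)}(-1)^{\abs{E(\gamma)\setminus T}}\frac{\Tr\left[e^{-\beta\left(\sum_{v \in V(\gamma)}\Phi_v+\lambda\sum_{e \in T}\Psi_e\right)}\right]}{Z_\gamma(\beta,0)}, \notag
\end{equation}
a sum of only $2^{\norm{\gamma}}$ exactly computable traces. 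This is not cosmetic: the FPTAS requires computing $w_\gamma$ exactly in time $\exp(O(\norm{\gamma}))$ (the paper's Lemma~\ref{lemma:PartitionFunctionWeightAlgorithm}), and an infinite-series definition would force you to truncate the weight itself and control that error nested inside the cluster-expansion truncation — an extra layer of analysis your plan does not address. The paper then uses the Duhamel expansion, exactly as you suggest, but only to \emph{bound} the weight rather than to define it, yielding $\abs{w_\gamma}\leq\left(e^{2r\beta}(e^{\beta\abs{\lambda}}-1)\right)^{\norm{\gamma}}$ rather than your anticipated $(\beta\abs{\lambda}e^{2r\beta})^{\norm{\gamma}}$; the two differ only by the elementary $e^x-1 \leq ex$ for $x\leq 1$, which is where part of the $e^4$ is spent. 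With the inclusion--exclusion definition substituted in, the rest of your outline matches the paper's proof.
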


\begin{remark}
    Theorem~\ref{theorem:ApproximationAlgorithmPartitionFunction} extends naturally to fermionic systems where the non-interacting Hamiltonian is on-site.
\end{remark}

We prove Theorem~\ref{theorem:ApproximationAlgorithmPartitionFunction} by verifying that the conditions required to apply the abstract polymer model approximation theorem of Ref.~\cite{mann2024algorithmic} are satisfied. That is, we show that (1) the partition function $Z_G(\beta,\lambda)$ admits a suitable abstract polymer model representation, (2) the polymer weights satisfy the desired bound, and (3) the polymer weights can be computed in the desired time. This is achieved in the following three lemmas.

\begin{lemma}[{restate=[name=restatement]PartitionFunctionAbstractPolymerModel}]
    \label{lemma:PartitionFunctionAbstractPolymerModel}
    The partition function $Z_G(\beta,\lambda)$ admits the following abstract polymer model representation.
    \begin{equation}
        Z_G(\beta,\lambda) = Z_G(\beta,0)\sum_{\Gamma\in\mathcal{G}}\prod_{\gamma\in\Gamma}w_\gamma, \notag
    \end{equation}
    where
    \begin{equation}
        w_\gamma \coloneqq \!\!\sum_{T \subseteq E(\gamma)}\!\!(-1)^{\abs{E(\gamma){\setminus}T}}\frac{\Tr\left[e^{-\beta\left(\sum_{v \in V(\gamma)}\Phi_v+\lambda\sum_{e \in T}\Psi_e\right)}\right]}{Z_\gamma(\beta,0)}. \notag
    \end{equation}
\end{lemma}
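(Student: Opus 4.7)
The plan is to verify the claimed identity by expanding the polymer partition function on the right and recognising it, via Möbius inversion on the Boolean lattice of edge subsets, as $Z_G(\beta,\lambda)/Z_G(\beta,0)$. The shape of the formula for $w_\gamma$ already hints at this: it is a signed inclusion--exclusion over $T \subseteq E(\gamma)$.

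The key ingredient is a \emph{factorisation property}. For any $T \subseteq E$ with vertex set $V(T) \coloneqq \bigcup_{e \in T} e$, set
$$X(T) \coloneqq \frac{\Tr\left[e^{-\beta\left(\sum_{v \in V(T)} \Phi_v + \lambda \sum_{e \in T} \Psi_e\right)}\right]}{Z_{V(T)}(\beta,0)}.$$
Since $H_\Phi$ is on-site and each $\Psi_e$ acts only on $V(e)$, operators attached to vertex-disjoint edge sets commute and live on disjoint tensor factors of $\mathcal{H}_G$. When the sub-hypergraph $(V(T), T)$ decomposes into connected components $T_1, \ldots, T_k$, the exponential and the trace therefore factorise, giving $X(T) = \prod_i X(T_i)$. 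The same reasoning, applied to the isolated vertices of $G$, yields $Z_G(\beta,\lambda)/Z_G(\beta,0) = X(E)$; and for any connected subgraph $\gamma$ with $T \subseteq E(\gamma)$, the summand inside $w_\gamma$ simplifies to $X(T)$ after cancelling the contribution of the idle vertices $V(\gamma) \setminus V(T)$. Thus $w_\gamma = \sum_{T \subseteq E(\gamma)} (-1)^{|E(\gamma) \setminus T|} X(T)$.

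With this established, I would unfold the sum over admissible polymer collections as a sum over edge subsets. Since polymers are connected subgraphs and compatibility is vertex disjointness, an admissible $\Gamma$ is uniquely determined by its total edge set $T = \bigsqcup_{\gamma \in \Gamma} E(\gamma)$: the polymers of $\Gamma$ are precisely the connected components of $(V(T), T)$. Consequently,
$$\sum_{\Gamma \in \mathcal{G}} \prod_{\gamma \in \Gamma} w_\gamma = \sum_{T \subseteq E} \prod_i w_{\gamma_i(T)},$$
where $\{\gamma_i(T)\}$ are the connected components of $T$. Expanding each $w_{\gamma_i(T)}$ and using the multiplicativity of $X$ across vertex-disjoint parts produces a double sum over $S \subseteq T \subseteq E$. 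Swapping the order of summation, the inner sum over $T \supseteq S$ reduces to $\sum_{U \subseteq E \setminus S} (-1)^{|U|}$, which vanishes unless $S = E$. What remains is $X(E) = Z_G(\beta,\lambda)/Z_G(\beta,0)$, as required.

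The only substantive point is the factorisation $X(T_1 \sqcup T_2) = X(T_1) X(T_2)$ for vertex-disjoint $T_1, T_2$, which rests on the on-site structure of $H_\Phi$; everything else is Möbius inversion on a Boolean lattice. This is also why the remark following Theorem~\ref{theorem:ApproximationAlgorithmPartitionFunction} restricts the fermionic extension to the case where the non-interacting Hamiltonian is on-site.
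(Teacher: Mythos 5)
Your argument is correct and is essentially the paper's proof read in the reverse direction: the paper starts from $Z_G(\beta,\lambda)$, applies inclusion–exclusion on the Boolean lattice of edge subsets, pulls out $Z_G(\beta,0)$, and factorises the resulting summand over maximally connected components, whereas you start from the polymer sum, expand each $w_\gamma$, recombine using the same component factorisation (your multiplicativity of $X$), and let the same Möbius-type sign sum collapse everything to $X(E)$. The one piece you make explicit that the paper leaves implicit is the cancellation of the idle vertices $V(\gamma)\setminus V(T)$ inside $w_\gamma$ and the bijection between admissible polymer sets and edge subsets $S\subseteq E$; both are correct and match the paper's use of $\Gamma_S$ as the connected components of $S$.
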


We prove Lemma~\ref{lemma:PartitionFunctionAbstractPolymerModel} in Appendix~\ref{section:PartitionFunctionAbstractPolymerModel}.

\begin{lemma}[{restate=[name=restatement]PartitionFunctionWeightBound}]
    \label{lemma:PartitionFunctionWeightBound}
    Fix $\Delta,r\in\mathbb{Z}_{\geq2}$ and $\beta>0$. Let $G=(V, E)$ be a multihypergraph of maximum degree at most $\Delta$ and rank at most $r$, and let $\lambda$ be a complex number such that
    \begin{equation}
        \abs{\lambda} \leq \frac{e^{-2r\beta}}{e^4\beta\Delta\binom{r}{2}}. \notag
    \end{equation}
    Then, for all polymers $\gamma\in\mathcal{C}$, the weight $w_\gamma$ satisfies
    \begin{equation}
        \abs{w_\gamma} \leq \left(\frac{1}{e^3\Delta\binom{r}{2}}\right)^\norm{\gamma}. \notag
    \end{equation}
\end{lemma}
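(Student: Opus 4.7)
The plan is to expand the matrix exponential $e^{-\beta(H_0 + \lambda V_T)}$ via the Dyson (time-ordered) series in $\lambda$, and then recognise the outer alternating sum over $T \subseteq E(\gamma)$ as an inclusion--exclusion operator that retains only those terms whose associated sequence of edges covers all of $E(\gamma)$. Writing $H_0 \coloneqq \sum_{v \in V(\gamma)} \Phi_v$ and $V_T \coloneqq \sum_{e \in T} \Psi_e$, the Dyson expansion expresses $e^{-\beta(H_0 + \lambda V_T)}$ as a sum over $n \geq 0$ and ordered sequences $(e_1, \ldots, e_n) \in T^n$ of $(-\lambda)^n$ times the time-ordered integral $\int_{0 \leq t_1 \leq \cdots \leq t_n \leq \beta} e^{-(\beta - t_n)H_0} \Psi_{e_n} \cdots \Psi_{e_1} e^{-t_1 H_0}\,dt_1 \cdots dt_n$. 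Substituting into the formula for $w_\gamma$ and interchanging sums, the inner sum $\sum_{T \colon S \subseteq T \subseteq E(\gamma)}(-1)^{\abs{E(\gamma) \setminus T}}$ with $S = \{e_1, \ldots, e_n\}$ vanishes by inclusion--exclusion unless $S = E(\gamma)$, in which case it equals $1$. Hence $w_\gamma$ reduces to a sum over $n \geq \norm{\gamma}$ and surjective sequences $(e_1, \ldots, e_n)$ from $[n]$ onto $E(\gamma)$.

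Next, I would bound each time-ordered integrand in absolute value using $\norm{\Psi_e} \leq 1$, submultiplicativity of the operator norm, the bound $\norm{e^{-tH_0}} \leq e^{t\norm{H_0}} \leq e^{t\abs{V(\gamma)}}$ (from $\norm{\Phi_v} \leq 1$), and the normalised-trace inequality $\abs{\Tr[A]} \leq \norm{A}$, which yield a prefactor of $e^{\beta \abs{V(\gamma)}}$; the time-ordered simplex contributes $\beta^n/n!$. The denominator factorises over sites, with $\Tr_v[e^{-\beta \Phi_v}] \geq e^{-\beta}$ for each $v$, so $Z_\gamma(\beta, 0) \geq e^{-\beta \abs{V(\gamma)}}$. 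Summing over $n$ via the exponential generating function identity $\sum_{n \geq m} \mathrm{Surj}(n, m)\,x^n/n! = (e^x - 1)^m$, and using $\abs{V(\gamma)} \leq r\norm{\gamma}$ (since each of the $\norm{\gamma}$ edges has at most $r$ vertices), one obtains
\[
\abs{w_\gamma} \leq e^{2\beta r \norm{\gamma}}\bigl(e^{\beta \abs{\lambda}} - 1\bigr)^{\norm{\gamma}}.
\]

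Finally, since $\Delta, r \geq 2$, the hypothesis on $\abs{\lambda}$ forces $\beta \abs{\lambda} \leq 1$, so the elementary inequality $e^x - 1 \leq ex$ on $[0,1]$ together with the hypothesis gives $e^{2\beta r}(e^{\beta \abs{\lambda}} - 1) \leq e^{2\beta r} \cdot e \beta \abs{\lambda} \leq 1/(e^3 \Delta \binom{r}{2})$, and raising to the $\norm{\gamma}$-th power establishes the claim. The trivial case $\norm{\gamma} = 0$ reduces to $w_\gamma = 1$, matching the right-hand side. The main care required is in executing the Dyson-expansion / inclusion--exclusion step cleanly, justifying the interchange of sums via absolute convergence of the Dyson series under the above norm bounds, and in arranging the combinatorial constants so that the surjection generating function together with the $e^{2\beta r}$ operator-norm prefactor combine to exactly saturate the hypothesis on $\abs{\lambda}$.
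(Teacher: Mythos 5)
Your proposal is correct and follows essentially the same route as the paper's proof: the paper invokes the Duhamel expansion (which is the Dyson series you describe), lets the alternating sum over $T$ collapse onto sequences covering all of $E(\gamma)$, counts those via Stirling numbers of the second kind (equivalent to your surjection generating-function identity), bounds $Z_\gamma(\beta,0)^{-1}\leq e^{\beta\abs{\gamma}}$ and the trace by $e^{\beta\abs{\gamma}}$, and then closes with the vertex-count bound and the hypothesis on $\abs{\lambda}$. The only superficial differences are that you make the inclusion--exclusion collapse explicit (the paper leaves it implicit in the phrase ``the set of all sequences of edges in $\gamma$''), and you use $\abs{V(\gamma)}\leq r\norm{\gamma}$ while the paper uses the slightly tighter $\abs{\gamma}\leq(r-1)\norm{\gamma}+1$; both yield the same $e^{2\beta r\norm{\gamma}}$ prefactor.
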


We prove Lemma~\ref{lemma:PartitionFunctionWeightBound} in Appendix~\ref{section:PartitionFunctionWeightBound}.

\begin{lemma}
    \label{lemma:PartitionFunctionWeightAlgorithm}
    The weight $w_\gamma$ of a polymer $\gamma$ can be computed in time $\exp(O(\norm{\gamma}))$.
\end{lemma}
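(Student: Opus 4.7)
The plan is to evaluate the formula for $w_\gamma$ given in Lemma~\ref{lemma:PartitionFunctionAbstractPolymerModel} directly, by enumerating all subsets $T \subseteq E(\gamma)$ and computing each summand. The key observation is that because $\gamma$ is a connected subgraph of a multihypergraph of rank at most $r$, its order is controlled by its size: a straightforward induction on $\norm{\gamma}$ shows $\abs{\gamma} \leq (r-1)\norm{\gamma} + 1$. Hence the local Hilbert space $\mathcal{H}_\gamma \coloneqq \bigotimes_{v \in V(\gamma)} \mathcal{H}_v$ has dimension $D_\gamma \coloneqq d^{\abs{\gamma}} \leq d^{(r-1)\norm{\gamma}+1} = \exp(O(\norm{\gamma}))$, since $d$ and $r$ are constants.

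For each fixed $T$, I would assemble the matrix $H_T \coloneqq \sum_{v \in V(\gamma)} \Phi_v + \lambda \sum_{e \in T} \Psi_e$ on $\mathcal{H}_\gamma$ by embedding each single-site or single-edge operator into the full tensor product; this takes time polynomial in $D_\gamma$. I would then compute $e^{-\beta H_T}$ by diagonalising $H_T$ (or by truncating its power series), which again runs in time polynomial in $D_\gamma$ up to any desired polynomial precision, and take the trace. The denominator $Z_\gamma(\beta,0) = \prod_{v \in V(\gamma)} \Tr[e^{-\beta \Phi_v}]$ factorises across vertices and is computed in time linear in $\abs{\gamma}$.

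Summing over the $2^{\norm{\gamma}}$ subsets $T$, the total cost is
\begin{equation}
    2^{\norm{\gamma}} \cdot \operatorname{poly}(D_\gamma) \;=\; \exp(O(\norm{\gamma})) \cdot \exp(O(\norm{\gamma})) \;=\; \exp(O(\norm{\gamma})), \notag
\end{equation}
as claimed.

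I do not expect any serious obstacle: the argument is essentially brute-force enumeration together with the bound $\abs{\gamma} = O(\norm{\gamma})$ that comes from rank-boundedness. The only mild subtlety is the cost model for the matrix exponential. In the standard computational model used in this line of work one can compute $e^{-\beta H_T}$ to any inverse-polynomial precision in time polynomial in $D_\gamma$ and in the precision parameter; the resulting truncation errors propagate benignly and are absorbed into the overall $\epsilon$ of the FPTAS when Lemma~\ref{lemma:PartitionFunctionWeightAlgorithm} is plugged into the abstract polymer approximation theorem of Ref.~\cite{mann2024algorithmic}. I would flag this explicitly and then conclude.
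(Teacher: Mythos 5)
Your proposal is correct and follows essentially the same route as the paper: brute-force enumeration over the $2^{\norm{\gamma}}$ subsets $T$, diagonalisation of the local Hamiltonian on the Hilbert space of dimension $\exp(O(\norm{\gamma}))$ (using $\abs{\gamma}\leq(r-1)\norm{\gamma}+1$ from rank-boundedness), and factorisation of the denominator $Z_\gamma(\beta,0)$ over vertices. You spell out the dimension bound and the precision/cost model more explicitly than the paper does, but the argument is the same.
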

\begin{proof}
    The sum is over all subsets $T$ of $E(\gamma)$, of which there are $2^\norm{\gamma}$. For each of these subsets $T$, the trace may be evaluated in time $\exp(O(\norm{\gamma}))$ by diagonalising the sum of interactions and the partition function may be evaluated in time $\norm{\gamma}^{O(1)}$ by a straightforward factorisation argument.
\end{proof}

We now prove Theorem~\ref{theorem:ApproximationAlgorithmPartitionFunction}.

\begin{proof}[Proof of Theorem~\ref*{theorem:ApproximationAlgorithmPartitionFunction}]
    By Lemma~\ref{lemma:PartitionFunctionAbstractPolymerModel}, the partition function $Z_G(\beta,\lambda)$ admits an abstract polymer model representation, where the polymers are connected subgraphs of a bounded-degree bounded-rank multihypergraph, and compatibility is defined by vertex disjointness. By Lemma~\ref{lemma:PartitionFunctionWeightBound} and Lemma~\ref{lemma:PartitionFunctionWeightAlgorithm}, for all polymers $\gamma\in\mathcal{C}$, the weight $w_\gamma$ satisfies
    \begin{equation}
        \abs{w_\gamma} \leq \left(\frac{1}{e^3\Delta\binom{r}{2}}\right)^\norm{\gamma}, \notag
    \end{equation}
    and can be computed in time $\exp(O(\norm{\gamma}))$. The proof then follows from Ref.~\cite[Theorem 3]{mann2024algorithmic}, which states that there is an efficient approximation algorithm for abstract polymer model partition functions that satisfy these conditions.
\end{proof}

\section{Approximate Sampling from the Thermal Distribution}
\label{section:ApproximateSamplingFromTheThermalDistribution}

In this section we establish an efficient approximate sampling scheme for the thermal distribution over a classical spin space for weakly-interacting quantum spin systems at arbitrary temperature. Sampling from this distribution enables a direct comparison with quantum thermal state preparation algorithms. Our approach is based on a standard reduction from approximate sampling to approximate counting of marginal probabilities. Our result is formalised in the following theorem.

\begin{theorem}
    \label{theorem:ApproximationAlgorithmThermalDistribution}
    Fix $\Delta,r\in\mathbb{Z}_{\geq2}$ and $\beta>0$. Let $G=(V, E)$ be a multihypergraph of maximum degree at most $\Delta$ and rank at most $r$, and let $\lambda$ be a real number such that
    \begin{equation}
        \abs{\lambda} \leq \frac{e^{-2r\beta}}{e^4\beta\Delta\binom{r}{2}}. \notag
    \end{equation}
    Then there is an efficient approximate sampling scheme for the thermal distribution $\mu_{\rho_G(\beta,\lambda)}$  over the classical spin space $[d]^V$.
\end{theorem}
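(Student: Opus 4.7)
The plan is to use a standard reduction from approximate sampling to approximate counting via the chain rule for conditional probabilities. I fix an arbitrary ordering $v_1,\ldots,v_n$ of the vertices of $G$ and sample $x\in[d]^V$ by iteratively drawing each coordinate $x_{v_i}$ from its conditional distribution given the previously sampled values. Each such conditional probability is a ratio of consecutive marginals, and each marginal $\mu_{\rho_G(\beta,\lambda)}(x_S = \mathbf{s})$ can be expressed as the ratio of a pinned partition function $Z_G^{(S,\mathbf{s})}(\beta,\lambda) \coloneqq \Tr[\Pi_{S,\mathbf{s}}\,e^{-\beta(H_\Phi+\lambda H_\Psi)}]$ to the ordinary partition function $Z_G(\beta,\lambda)$, where $\Pi_{S,\mathbf{s}} \coloneqq \bigotimes_{v\in S}|s_v\rangle\langle s_v|\otimes\mathbb{I}_{V\setminus S}$. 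Since $\lambda$ is assumed real, the operator $e^{-\beta(H_\Phi+\lambda H_\Psi)}$ is positive semidefinite and these marginals are genuine probabilities.

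The key technical step is to establish analogs of Lemmas~\ref{lemma:PartitionFunctionAbstractPolymerModel}--\ref{lemma:PartitionFunctionWeightAlgorithm} for $Z_G^{(S,\mathbf{s})}(\beta,\lambda)$. Because $\Pi_{S,\mathbf{s}}$ factorises as a tensor product over the pinned vertices, the inclusion--exclusion expansion used in the proof of Lemma~\ref{lemma:PartitionFunctionAbstractPolymerModel} still applies and yields an abstract polymer model representation of $Z_G^{(S,\mathbf{s})}(\beta,\lambda)$ of the same form, whose modified weights $w_\gamma^{(S,\mathbf{s})}$ are obtained by inserting the pinning projection at the vertices of $V(\gamma)\cap S$ and correspondingly adjusting the local normaliser $Z_\gamma(\beta,0)$. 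Since $\Pi_{S,\mathbf{s}}$ has operator norm one and factorises over vertices, the operator-norm estimates underlying Lemma~\ref{lemma:PartitionFunctionWeightBound} adapt directly, so the modified weights satisfy the same decay bound; the algorithmic estimate of Lemma~\ref{lemma:PartitionFunctionWeightAlgorithm} is unchanged, as the same collection of subsets $T\subseteq E(\gamma)$ is summed.

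With these analogs in hand, Theorem~\ref{theorem:ApproximationAlgorithmPartitionFunction} applied to both $Z_G(\beta,\lambda)$ and $Z_G^{(S,\mathbf{s})}(\beta,\lambda)$ yields a fully polynomial-time approximation scheme for every marginal, and hence for every conditional probability appearing in the chain rule. Running this scheme with relative error $O(\epsilon/n)$ at each of the $n$ steps and sampling each $x_{v_i}$ from the resulting approximate conditional distribution produces a sample whose distribution $\hat{\mu}$ satisfies $\norm{\mu_{\rho_G(\beta,\lambda)}-\hat{\mu}}_\text{TV} \leq \epsilon$ by a standard telescoping bound on the accumulated error. The total running time is polynomial in $n$ and $1/\epsilon$, which establishes the theorem.

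The main obstacle will be the pinned weight bound. While the inclusion--exclusion structure and polymer geometry carry over immediately, one must verify that inserting the rank-one on-site projectors neither inflates $\abs{w_\gamma^{(S,\mathbf{s})}}$ nor shrinks the local normaliser by a factor depending on $\norm{\gamma}$. Since $\norm{\Phi_v} \leq 1$ for every $v$, each diagonal entry $\langle s_v | e^{-\beta\Phi_v} | s_v \rangle$ and each single-site partition function $\Tr[e^{-\beta\Phi_v}]$ lies in $[e^{-\beta},e^{\beta}]$, so their ratio contributes only a factor of the form $e^{O(\beta)}$ per pinned vertex of $\gamma$. This is compatible with the $e^{-2r\beta}$ factor already built into the hypothesis on $\abs{\lambda}$, suggesting that the argument of Appendix~\ref{section:PartitionFunctionWeightBound} extends with at most a harmless change of constants that can be absorbed into the stated bound.
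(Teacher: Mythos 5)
Your proposal takes essentially the same route as the paper: reduce approximate sampling to approximate counting of marginals via the chain rule (the paper's Lemma~\ref{lemma:ApproximateCountingApproximateSampling}), and obtain a fully polynomial-time approximation scheme for the marginals by running the cluster-expansion machinery of Theorem~\ref{theorem:ApproximationAlgorithmPartitionFunction} on pinned partition functions, which is precisely what the paper means by ``restricting the trace to the appropriate subspace.'' Your caution about the pinned weight bound is in fact unnecessary: under the normalised trace, the $d^{-\abs{V(\gamma)\cap S}}$ factors introduced by the pinning projector cancel between the numerator trace bound and the pinned local normaliser $Z^{(S,\mathbf{s})}_\gamma(\beta,0)$, so the bound of Lemma~\ref{lemma:PartitionFunctionWeightBound} carries over verbatim with no change of constants.
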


\begin{remark}
    Theorem~\ref{theorem:ApproximationAlgorithmThermalDistribution} extends naturally to fermionic systems where the non-interacting Hamiltonian is on-site.
\end{remark}

To prove Theorem~\ref{theorem:ApproximationAlgorithmThermalDistribution}, we use the following lemma, which is standard in the theory of approximate counting and sampling. It shows that the existence of a fully polynomial-time approximation scheme for marginal probabilities implies an efficient approximate sampling scheme for the distribution. We note that more general results are known for self-reducible problems~\cite{jerrum1986random, sinclair1989approximate}.

\begin{lemma}[{restate=[name=restatement]ApproximateCountingApproximateSampling}]
    \label{lemma:ApproximateCountingApproximateSampling}
    Let $n\in\mathbb{N}$ and $d\in\mathbb{Z}_{\geq2}$ be integers, and let $\mu$ be a probability distribution over $[d]^n$. If there exists a fully polynomial-time approximation scheme for the marginal probabilities $\mu(x)$ for all $x\in\bigcup_{S\subseteq[n]}[d]^S$, then there is an efficient approximate sampling scheme for $\mu$.
\end{lemma}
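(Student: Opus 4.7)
The plan is to sample the coordinates sequentially using the chain rule for conditional probabilities, realising each conditional distribution via calls to the given FPTAS for marginals. Specifically, at step $i$, having already sampled $x_1, \ldots, x_{i-1}$, I invoke the FPTAS on each of the $d$ partial assignments $(x_1, \ldots, x_{i-1}, j)$ for $j \in [d]$ to obtain multiplicative $\epsilon'$-approximations $\hat{p}_{i,j}$ of $p_{i,j} \coloneqq \mu(x_1, \ldots, x_{i-1}, j)$. I then sample $x_i$ from the renormalised approximate conditional distribution $\hat{\nu}_i(j) \coloneqq \hat{p}_{i,j}/\sum_{k \in [d]}\hat{p}_{i,k}$. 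Repeating this for $i = 1, \ldots, n$ produces a sample from some distribution $\hat{\mu}$, which is the output of the approximate sampler.

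For the error analysis, let $\nu_i$ denote the exact conditional distribution of the $i$-th coordinate under $\mu$ given the history $x_1, \ldots, x_{i-1}$. A short calculation propagates the multiplicative errors in the numerator and denominator to show that $\norm{\nu_i - \hat{\nu}_i}_\text{TV} = O(\epsilon')$ uniformly in the history. I then couple the exact and approximate sampling processes step by step, using the optimal coupling of $\nu_i$ and $\hat{\nu}_i$ at step $i$ conditional on the shared history, and take a union bound over the $n$ steps to obtain $\norm{\mu - \hat{\mu}}_\text{TV} \leq n \cdot O(\epsilon')$. Choosing $\epsilon' = \Theta(\epsilon/n)$ yields the desired total variation accuracy, and since the algorithm makes $O(nd)$ FPTAS calls at precision $\epsilon'$, the overall runtime is polynomial in $n$ and $1/\epsilon$.

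The argument is standard and presents no real conceptual obstacle; the main care is in controlling the compounding of multiplicative errors and in the book-keeping of the per-step coupling. One subtlety worth flagging is the handling of vanishing marginals: if $p_{i,j} = 0$ for some $j$, the FPTAS definition forces $\hat{p}_{i,j} = 0$, so the sampling process never selects a value leading to a history of zero $\mu$-measure, and all subsequent marginal queries remain well-defined. Consequently, inductively on $i$, we only ever call the FPTAS on partial assignments with strictly positive marginal mass along the sampled trajectory, and the error analysis above applies without modification.
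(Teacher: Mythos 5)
Your proposal is correct and uses the same algorithm as the paper: sample coordinates sequentially via the chain rule, computing each approximate conditional by querying the FPTAS for the $d$ extended marginals and renormalising, with per-call precision $\Theta(\epsilon/n)$. The only substantive difference is in the error analysis. The paper propagates multiplicative errors directly through the product $\hat{\mu}(x) = \prod_{i=1}^n \hat{\mu}(x_i \mid x_{\prec i})$ to obtain a pointwise relative bound $\abs{\mu(x) - \hat{\mu}(x)} \leq 2\epsilon\,\mu(x)$, from which the total variation bound follows by summation. You instead pass to per-step total variation bounds on the conditional distributions and assemble them via a step-by-step coupling and a union bound over $n$ steps. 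Both routes are standard and both close; the paper's direct multiplicative calculation is slightly more economical and yields the stronger pointwise statement as a by-product, whereas your coupling argument is perhaps more modular and would adapt more readily to settings where only additive per-step error bounds are available. Your observation about vanishing marginals (an exact-zero marginal forces an exact-zero FPTAS output, so the sampled trajectory never enters a zero-measure history) is correct and a detail the paper leaves implicit.
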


We prove Lemma~\ref{lemma:ApproximateCountingApproximateSampling} in Appendix~\ref{section:ApproximateCountingApproximateSampling}. We now prove Theorem~\ref{theorem:ApproximationAlgorithmThermalDistribution}.

\begin{proof}[Proof of Theorem~\ref*{theorem:ApproximationAlgorithmThermalDistribution}]
    By Theorem~\ref{theorem:ApproximationAlgorithmPartitionFunction}, there is a fully polynomial-time approximation scheme for $Z_G(\beta,\lambda)$, which extends to the marginal probabilities $\mu_{\rho_G(\beta,\lambda)}(x)$ for all $x\in\bigcup_{S \subseteq V}[d]^S$ by restricting the trace to basis states consistent with $x$, that is, summing only over basis states that agree with $x$ on $S$. The result then follows from Lemma~\ref{lemma:ApproximateCountingApproximateSampling}.
\end{proof}

We note that this approach can be applied with the results of Refs.~\cite{mann2021efficient, mann2024algorithmic} to obtain an efficient approximate sampling scheme for quantum spin systems at high temperature. A similar result in this setting was obtained in Ref.~\cite{yin2023polynomial} using related methods. Our approach leads to sharper bounds on the inverse temperature and demonstrates that such results can be obtained straightforwardly from results on approximate counting for quantum spin systems. At higher temperatures, efficient sampling algorithms have been developed with runtime polynomial in $\abs{V}$ and $\log(1/\epsilon)$~\cite{bakshi2024high, ramkumar2025high}.

\section{Conclusion \& Outlook}
\label{section:ConclusionAndOutlook}

We have established efficient algorithms for weakly-interacting quantum spin systems at arbitrary temperature. We obtained a fully polynomial-time approximation scheme for the partition function and an efficient approximate sampling scheme for the thermal distribution over a classical spin space.

It would be interesting to extend these methods to local perturbations of free-fermionic Hamiltonians, where efficient algorithms are known via the cumulant expansion~\cite{chen2025convergence}. It would also be interesting to obtain efficient algorithms with runtime polynomial in $\abs{V}$ and $\log(1/\epsilon)$, extending recent progress on high-temperature sampling algorithms~\cite{bakshi2024high, ramkumar2025high}. We note that efficient algorithms with a bound on the perturbation parameter that is independent of temperature can be obtained under additional assumptions by applying the framework of Ref.~\cite{helmuth2023efficient}.

\section*{Acknowledgements}

We thank Samuel Scalet for helpful discussions. RLM was supported by the ARC Centre of Excellence for Quantum Computation and Communication Technology (CQC2T), project number CE170100012. GW was supported by a scholarship from the Sydney Quantum Academy and supported by the ARC Centre of Excellence for Quantum Computation and Communication Technology (CQC2T), project number CE170100012.

\appendix

\onecolumngrid

\section{Proof of Lemma~\ref*{lemma:PartitionFunctionAbstractPolymerModel}}
\label{section:PartitionFunctionAbstractPolymerModel}

\PartitionFunctionAbstractPolymerModel*

\begin{proof}
    By the principle of inclusion-exclusion (see for example \cite[Theorem 12.1]{graham1995handbook}),
    \begin{align}
        Z_G(\beta,\lambda) &= \Tr\left[e^{-\beta(H_\Phi + \lambda H_\Psi)}\right] \notag \\
        &= \sum_{S \subseteq E}(-1)^\abs{S}\sum_{T \subseteq S}(-1)^\abs{T}\Tr\left[e^{-\beta\left(H_\Phi+\lambda\sum_{e \in T}\Psi_e\right)}\right]. \notag
    \end{align}
    We now extract an overall factor of $Z_G(\beta,0)=\Tr\left[e^{-\beta H_\Phi}\right]$. This gives
    \begin{equation}
        Z_G(\beta,\lambda) = Z_G(\beta,0)\sum_{S \subseteq E}(-1)^\abs{S}\sum_{T \subseteq S}(-1)^\abs{T}\frac{\Tr\left[e^{-\beta\left(\sum_{v\in\operatorname{supp}(S)}\Phi_v+\lambda\sum_{e \in T}\Psi_e\right)}\right]}{\Tr\left[e^{-\beta\sum_{v\in\operatorname{supp}(S)}\Phi_v}\right]}. \notag
    \end{equation}
    For a subset $S \subseteq E$, let $\Gamma_S$ denote the set of maximally connected components of $S$. By factorising over these components, we have
    \begin{align}
        Z_G(\beta,\lambda) &= Z_G(\beta,0)\sum_{S \subseteq E}\prod_{\gamma\in\Gamma_S}\sum_{T \subseteq E(\gamma)}(-1)^{\abs{E(\gamma){\setminus}T}}\frac{\Tr\left[e^{-\beta\left(\sum_{v \in V(\gamma)}\Phi_v+\lambda\sum_{e \in T}\Psi_e\right)}\right]}{Z_\gamma(\beta,0)} \notag \\
        &= Z_G(\beta,0)\sum_{S \subseteq E}\prod_{\gamma\in\Gamma_S}w_\gamma \notag \\
        &= Z_G(\beta,0)\sum_{\Gamma\in\mathcal{G}}\prod_{\gamma\in\Gamma}w_\gamma. \notag
    \end{align}
    This completes the proof.
\end{proof}

\section{Proof of Lemma~\ref*{lemma:PartitionFunctionWeightBound}}
\label{section:PartitionFunctionWeightBound}

\PartitionFunctionWeightBound*

\begin{proof}
    Fix a polymer $\gamma$. Let $P$ denote the set of all sequences of edges in $\gamma$. By applying the Duhamel expansion, it follows from the triangle inequality and the submultiplicativity of the norm that
    \begin{align}
        \abs{w_\gamma} &\leq \norm{\sum_{T \subseteq E(\gamma)}(-1)^\abs{T}\frac{\Tr\left[e^{-\beta\left(\sum_{v \in V(\gamma)}\Phi_v+\lambda\sum_{e \in T}\Psi_e\right)}\right]}{Z_\gamma(\beta,0)}} \notag \\
        &\leq \frac{e^{\beta\abs{\gamma}}}{Z_\gamma(\beta,0)}\sum_{\rho \in P}\frac{(\beta\abs{\lambda})^\abs{\rho}}{\abs{\rho}!}\prod_{e\in\rho}\norm{\Psi_e} \notag \\
        &\leq e^{2\beta\abs{\gamma}}\sum_{\rho \in P}\frac{(\beta\abs{\lambda})^\abs{\rho}}{\abs{\rho}!}. \notag
    \end{align}
    There are exactly $\genfrac{\{}{\}}{0pt}{}{n}{\norm{\gamma}}\norm{\gamma}!$ sequences $\rho$ of length $n$ whose support is $\gamma$, where $\genfrac{\{}{\}}{0pt}{}{n}{\norm{\gamma}}$ denotes the Stirling number of the second kind. Hence, we may write
    \begin{equation}
        \abs{w_\gamma} \leq e^{2\beta\abs{\gamma}}\sum_{n=\norm{\gamma}}^\infty\genfrac{\{}{\}}{0pt}{}{n}{\norm{\gamma}}\frac{\norm{\gamma}!}{n!}(\beta\abs{\lambda})^\abs{\rho} = e^{2\beta\abs{\gamma}}\left(e^{\beta\abs{\lambda}}-1\right)^\norm{\gamma}, \notag
    \end{equation}
    where we have used the identity $\sum_{n=k}^\infty\genfrac{\{}{\}}{0pt}{}{n}{k}\frac{x^n}{n!}=\frac{(e^x-1)^k}{k!}$. Now, since $\abs{\gamma}$ is at most $(r-1)\norm{\gamma}+1$, we have
    \begin{equation}
        \abs{w_\gamma} \leq \left(e^{2r\beta}\left(e^{\beta\abs{\lambda}}-1\right)\right)^\norm{\gamma}. \notag
    \end{equation}
    By taking $\abs{\lambda}\leq\left(e^4\beta\Delta\binom{r}{2}e^{2r\beta}\right)^{-1}$, we obtain
    \begin{equation}
        \abs{w_\gamma} \leq \left(\frac{1}{e^3\Delta\binom{r}{2}}\right)^\norm{\gamma}, \notag
    \end{equation}
    completing the proof.
\end{proof}

\section{Proof of Lemma~\ref*{lemma:ApproximateCountingApproximateSampling}}
\label{section:ApproximateCountingApproximateSampling}

\ApproximateCountingApproximateSampling*

\begin{proof}
    We construct an efficient approximate sampling scheme by applying the chain rule for conditional probabilities. For a sequence $x=(x_i)_{i=1}^n$, we sample each $x_i\in[d]$ conditioned on the previously sampled values $x_{\prec i}\coloneqq(x_j)_{j=1}^{i-1}$, using the conditional probability
    \begin{equation}
        \hat{\mu}(x_i \mid x_{\prec i}) \coloneqq \frac{\hat{\mu}(x_{\prec i}x_i)}{\sum_{y\in[d]}\hat{\mu}(x_{\prec i}y)}, \notag
    \end{equation}
    where $\hat{\mu}(x_{\prec i}x_i)$ is an approximation to $\mu(x_{\prec i}x_i)$. These approximations are obtained via the assumed fully polynomial-time approximation scheme to within an error of $\epsilon/(3n)$. Then, for $0<\epsilon\leq1$,
    \begin{equation}
        \abs{\mu(x_i \mid x_{\prec i})-\hat{\mu}(x_i \mid x_{\prec i})} \leq \frac{\epsilon}{n}\mu(x_i \mid x_{\prec i}). \notag
    \end{equation}
    Hence, we sample $x\in[d]^n$ with probability 
    \begin{equation}
        \hat{\mu}(x) = \prod_{i=1}^n\hat{\mu}(x_i \mid x_{\prec i}), \notag
    \end{equation}
    such that
    \begin{equation}
        \abs{\mu(x)-\hat{\mu}(x)} \leq 2\epsilon\mu(x). \notag
    \end{equation}
    Therefore, the total variation distance between $\mu$ and $\hat{\mu}$ is bounded by
    \begin{equation}
        \norm{\mu-\hat{\mu}}_\text{TV} = \frac{1}{2}\sum_{x\in[d]^n}\abs{\mu(x)-\hat{\mu}(x)} \leq \epsilon\sum_{x\in[d]^n}\mu(x) = \epsilon. \notag
    \end{equation}
    This establishes an efficient approximate sampling scheme for $\mu$, completing the proof.
\end{proof}

\twocolumngrid

\bibliography{bibliography}

\begin{thebibliography}{28}%
\makeatletter
\providecommand \@ifxundefined [1]{%
 \@ifx{#1\undefined}
}%
\providecommand \@ifnum [1]{%
 \ifnum #1\expandafter \@firstoftwo
 \else \expandafter \@secondoftwo
 \fi
}%
\providecommand \@ifx [1]{%
 \ifx #1\expandafter \@firstoftwo
 \else \expandafter \@secondoftwo
 \fi
}%
\providecommand \natexlab [1]{#1}%
\providecommand \enquote  [1]{``#1''}%
\providecommand \bibnamefont  [1]{#1}%
\providecommand \bibfnamefont [1]{#1}%
\providecommand \citenamefont [1]{#1}%
\providecommand \href@noop [0]{\@secondoftwo}%
\providecommand \href [0]{\begingroup \@sanitize@url \@href}%
\providecommand \@href[1]{\@@startlink{#1}\@@href}%
\providecommand \@@href[1]{\endgroup#1\@@endlink}%
\providecommand \@sanitize@url [0]{\catcode `\\12\catcode `\$12\catcode `\&12\catcode `\#12\catcode `\^12\catcode `\_12\catcode `\%12\relax}%
\providecommand \@@startlink[1]{}%
\providecommand \@@endlink[0]{}%
\providecommand \url  [0]{\begingroup\@sanitize@url \@url }%
\providecommand \@url [1]{\endgroup\@href {#1}{\urlprefix }}%
\providecommand \urlprefix  [0]{URL }%
\providecommand \Eprint [0]{\href }%
\providecommand \doibase [0]{https://doi.org/}%
\providecommand \selectlanguage [0]{\@gobble}%
\providecommand \bibinfo  [0]{\@secondoftwo}%
\providecommand \bibfield  [0]{\@secondoftwo}%
\providecommand \translation [1]{[#1]}%
\providecommand \BibitemOpen [0]{}%
\providecommand \bibitemStop [0]{}%
\providecommand \bibitemNoStop [0]{.\EOS\space}%
\providecommand \EOS [0]{\spacefactor3000\relax}%
\providecommand \BibitemShut  [1]{\csname bibitem#1\endcsname}%
\let\auto@bib@innerbib\@empty
\bibitem [{\citenamefont {Helmuth}\ \emph {et~al.}(2020)\citenamefont {Helmuth}, \citenamefont {Perkins},\ and\ \citenamefont {Regts}}]{helmuth2020algorithmic}%
  \BibitemOpen
  \bibfield  {author} {\bibinfo {author} {\bibfnamefont {T.}~\bibnamefont {Helmuth}}, \bibinfo {author} {\bibfnamefont {W.}~\bibnamefont {Perkins}},\ and\ \bibinfo {author} {\bibfnamefont {G.}~\bibnamefont {Regts}},\ }\href {https://doi.org/10.1007/s00440-019-00928-y} {\bibfield  {journal} {\bibinfo  {journal} {Probability Theory and Related Fields}\ }\textbf {\bibinfo {volume} {176}},\ \bibinfo {pages} {851} (\bibinfo {year} {2020})},\ \Eprint {https://arxiv.org/abs/1806.11548} {arXiv:1806.11548} \BibitemShut {NoStop}%
\bibitem [{\citenamefont {Jenssen}\ \emph {et~al.}(2020)\citenamefont {Jenssen}, \citenamefont {Keevash},\ and\ \citenamefont {Perkins}}]{jenssen2020algorithms}%
  \BibitemOpen
  \bibfield  {author} {\bibinfo {author} {\bibfnamefont {M.}~\bibnamefont {Jenssen}}, \bibinfo {author} {\bibfnamefont {P.}~\bibnamefont {Keevash}},\ and\ \bibinfo {author} {\bibfnamefont {W.}~\bibnamefont {Perkins}},\ }\href {https://doi.org/10.1137/19M1286669} {\bibfield  {journal} {\bibinfo  {journal} {SIAM Journal on Computing}\ }\textbf {\bibinfo {volume} {49}},\ \bibinfo {pages} {681} (\bibinfo {year} {2020})},\ \Eprint {https://arxiv.org/abs/1807.04804} {arXiv:1807.04804} \BibitemShut {NoStop}%
\bibitem [{\citenamefont {Chen}\ \emph {et~al.}(2019)\citenamefont {Chen}, \citenamefont {Galanis}, \citenamefont {Goldberg}, \citenamefont {Perkins}, \citenamefont {Stewart},\ and\ \citenamefont {Vigoda}}]{chen2019fast}%
  \BibitemOpen
  \bibfield  {author} {\bibinfo {author} {\bibfnamefont {Z.}~\bibnamefont {Chen}}, \bibinfo {author} {\bibfnamefont {A.}~\bibnamefont {Galanis}}, \bibinfo {author} {\bibfnamefont {L.~A.}\ \bibnamefont {Goldberg}}, \bibinfo {author} {\bibfnamefont {W.}~\bibnamefont {Perkins}}, \bibinfo {author} {\bibfnamefont {J.}~\bibnamefont {Stewart}},\ and\ \bibinfo {author} {\bibfnamefont {E.}~\bibnamefont {Vigoda}},\ }in\ \href {https://doi.org/10.4230/LIPIcs.APPROX-RANDOM.2019.41} {\emph {\bibinfo {booktitle} {Approximation, Randomization, and Combinatorial Optimization. Algorithms and Techniques (APPROX/RANDOM 2019)}}}\ (\bibinfo {organization} {Schloss Dagstuhl-Leibniz-Zentrum fuer Informatik},\ \bibinfo {year} {2019})\ \Eprint {https://arxiv.org/abs/1901.06653} {arXiv:1901.06653} \BibitemShut {NoStop}%
\bibitem [{\citenamefont {Cannon}\ and\ \citenamefont {Perkins}(2020)}]{cannon2020counting}%
  \BibitemOpen
  \bibfield  {author} {\bibinfo {author} {\bibfnamefont {S.}~\bibnamefont {Cannon}}\ and\ \bibinfo {author} {\bibfnamefont {W.}~\bibnamefont {Perkins}},\ }in\ \href {https://doi.org/10.1137/1.9781611975994.88} {\emph {\bibinfo {booktitle} {Proceedings of the Fourteeth Annual ACM-SIAM Symposium on Discrete Algorithms}}}\ (\bibinfo {organization} {SIAM},\ \bibinfo {year} {2020})\ pp.\ \bibinfo {pages} {1456--1466},\ \Eprint {https://arxiv.org/abs/1906.01666} {arXiv:1906.01666} \BibitemShut {NoStop}%
\bibitem [{\citenamefont {Jenssen}\ and\ \citenamefont {Perkins}(2020)}]{jenssen2020independent}%
  \BibitemOpen
  \bibfield  {author} {\bibinfo {author} {\bibfnamefont {M.}~\bibnamefont {Jenssen}}\ and\ \bibinfo {author} {\bibfnamefont {W.}~\bibnamefont {Perkins}},\ }\href {https://doi.org/10.1112/jlms.12331} {\bibfield  {journal} {\bibinfo  {journal} {Journal of the London Mathematical Society}\ }\textbf {\bibinfo {volume} {102}},\ \bibinfo {pages} {645} (\bibinfo {year} {2020})},\ \Eprint {https://arxiv.org/abs/1907.00862} {arXiv:1907.00862} \BibitemShut {NoStop}%
\bibitem [{\citenamefont {Jenssen}\ \emph {et~al.}(2023)\citenamefont {Jenssen}, \citenamefont {Perkins},\ and\ \citenamefont {Potukuchi}}]{jenssen2023approximately}%
  \BibitemOpen
  \bibfield  {author} {\bibinfo {author} {\bibfnamefont {M.}~\bibnamefont {Jenssen}}, \bibinfo {author} {\bibfnamefont {W.}~\bibnamefont {Perkins}},\ and\ \bibinfo {author} {\bibfnamefont {A.}~\bibnamefont {Potukuchi}},\ }\href {https://doi.org/10.1002/rsa.21145} {\bibfield  {journal} {\bibinfo  {journal} {Random Structures \& Algorithms}\ }\textbf {\bibinfo {volume} {63}},\ \bibinfo {pages} {215} (\bibinfo {year} {2023})},\ \Eprint {https://arxiv.org/abs/2109.03744} {arXiv:2109.03744} \BibitemShut {NoStop}%
\bibitem [{\citenamefont {Galvin}\ \emph {et~al.}(2024)\citenamefont {Galvin}, \citenamefont {McKinley}, \citenamefont {Perkins}, \citenamefont {Sarantis},\ and\ \citenamefont {Tetali}}]{galvin2024zeroes}%
  \BibitemOpen
  \bibfield  {author} {\bibinfo {author} {\bibfnamefont {D.}~\bibnamefont {Galvin}}, \bibinfo {author} {\bibfnamefont {G.}~\bibnamefont {McKinley}}, \bibinfo {author} {\bibfnamefont {W.}~\bibnamefont {Perkins}}, \bibinfo {author} {\bibfnamefont {M.}~\bibnamefont {Sarantis}},\ and\ \bibinfo {author} {\bibfnamefont {P.}~\bibnamefont {Tetali}},\ }\href {https://doi.org/10.1017/S0963548323000330} {\bibfield  {journal} {\bibinfo  {journal} {Combinatorics, Probability and Computing}\ }\textbf {\bibinfo {volume} {33}},\ \bibinfo {pages} {65} (\bibinfo {year} {2024})},\ \Eprint {https://arxiv.org/abs/2211.00464} {arXiv:2211.00464} \BibitemShut {NoStop}%
\bibitem [{\citenamefont {Collares}\ \emph {et~al.}(2025)\citenamefont {Collares}, \citenamefont {Erde}, \citenamefont {Geisler},\ and\ \citenamefont {Kang}}]{collares2025counting}%
  \BibitemOpen
  \bibfield  {author} {\bibinfo {author} {\bibfnamefont {M.}~\bibnamefont {Collares}}, \bibinfo {author} {\bibfnamefont {J.}~\bibnamefont {Erde}}, \bibinfo {author} {\bibfnamefont {A.}~\bibnamefont {Geisler}},\ and\ \bibinfo {author} {\bibfnamefont {M.}~\bibnamefont {Kang}},\ }\href@noop {} {\bibfield  {journal} {\bibinfo  {journal} {arXiv e-prints}\ } (\bibinfo {year} {2025})},\ \Eprint {https://arxiv.org/abs/2503.22255} {arXiv:2503.22255} \BibitemShut {NoStop}%
\bibitem [{\citenamefont {Borgs}\ \emph {et~al.}(2020)\citenamefont {Borgs}, \citenamefont {Chayes}, \citenamefont {Helmuth}, \citenamefont {Perkins},\ and\ \citenamefont {Tetali}}]{borgs2020efficient}%
  \BibitemOpen
  \bibfield  {author} {\bibinfo {author} {\bibfnamefont {C.}~\bibnamefont {Borgs}}, \bibinfo {author} {\bibfnamefont {J.}~\bibnamefont {Chayes}}, \bibinfo {author} {\bibfnamefont {T.}~\bibnamefont {Helmuth}}, \bibinfo {author} {\bibfnamefont {W.}~\bibnamefont {Perkins}},\ and\ \bibinfo {author} {\bibfnamefont {P.}~\bibnamefont {Tetali}},\ }in\ \href {https://doi.org/10.1145/3357713.3384271} {\emph {\bibinfo {booktitle} {Proceedings of the 52nd Annual ACM SIGACT Symposium on Theory of Computing}}}\ (\bibinfo {organization} {ACM},\ \bibinfo {year} {2020})\ pp.\ \bibinfo {pages} {738--751},\ \Eprint {https://arxiv.org/abs/1909.09298} {arXiv:1909.09298} \BibitemShut {NoStop}%
\bibitem [{\citenamefont {Helmuth}\ \emph {et~al.}(2023)\citenamefont {Helmuth}, \citenamefont {Jenssen},\ and\ \citenamefont {Perkins}}]{helmuth2023finite}%
  \BibitemOpen
  \bibfield  {author} {\bibinfo {author} {\bibfnamefont {T.}~\bibnamefont {Helmuth}}, \bibinfo {author} {\bibfnamefont {M.}~\bibnamefont {Jenssen}},\ and\ \bibinfo {author} {\bibfnamefont {W.}~\bibnamefont {Perkins}},\ }\href {https://doi.org/10.1214/22-AIHP1263} {\bibfield  {journal} {\bibinfo  {journal} {Annales de l'Institut Henri Poincare (B) Probabilites et statistiques}\ }\textbf {\bibinfo {volume} {59}},\ \bibinfo {pages} {817} (\bibinfo {year} {2023})},\ \Eprint {https://arxiv.org/abs/2006.11580} {arXiv:2006.11580} \BibitemShut {NoStop}%
\bibitem [{\citenamefont {Carlson}\ \emph {et~al.}(2024)\citenamefont {Carlson}, \citenamefont {Davies}, \citenamefont {Fraiman}, \citenamefont {Kolla}, \citenamefont {Potukuchi},\ and\ \citenamefont {Yap}}]{carlson2024algorithms}%
  \BibitemOpen
  \bibfield  {author} {\bibinfo {author} {\bibfnamefont {C.}~\bibnamefont {Carlson}}, \bibinfo {author} {\bibfnamefont {E.}~\bibnamefont {Davies}}, \bibinfo {author} {\bibfnamefont {N.}~\bibnamefont {Fraiman}}, \bibinfo {author} {\bibfnamefont {A.}~\bibnamefont {Kolla}}, \bibinfo {author} {\bibfnamefont {A.}~\bibnamefont {Potukuchi}},\ and\ \bibinfo {author} {\bibfnamefont {C.}~\bibnamefont {Yap}},\ }\href {https://doi.org/10.1017/S0963548324000087} {\bibfield  {journal} {\bibinfo  {journal} {Combinatorics, Probability and Computing}\ }\textbf {\bibinfo {volume} {33}},\ \bibinfo {pages} {487} (\bibinfo {year} {2024})},\ \Eprint {https://arxiv.org/abs/2204.01923} {arXiv:2204.01923} \BibitemShut {NoStop}%
\bibitem [{\citenamefont {Mann}\ and\ \citenamefont {Helmuth}(2021)}]{mann2021efficient}%
  \BibitemOpen
  \bibfield  {author} {\bibinfo {author} {\bibfnamefont {R.~L.}\ \bibnamefont {Mann}}\ and\ \bibinfo {author} {\bibfnamefont {T.}~\bibnamefont {Helmuth}},\ }\href {https://doi.org/10.1063/5.0013689} {\bibfield  {journal} {\bibinfo  {journal} {Journal of Mathematical Physics}\ }\textbf {\bibinfo {volume} {62}},\ \bibinfo {pages} {022201} (\bibinfo {year} {2021})},\ \Eprint {https://arxiv.org/abs/2004.11568} {arXiv:2004.11568} \BibitemShut {NoStop}%
\bibitem [{\citenamefont {Helmuth}\ and\ \citenamefont {Mann}(2023)}]{helmuth2023efficient}%
  \BibitemOpen
  \bibfield  {author} {\bibinfo {author} {\bibfnamefont {T.}~\bibnamefont {Helmuth}}\ and\ \bibinfo {author} {\bibfnamefont {R.~L.}\ \bibnamefont {Mann}},\ }\href {https://doi.org/10.22331/q-2023-10-25-1155} {\bibfield  {journal} {\bibinfo  {journal} {Quantum}\ }\textbf {\bibinfo {volume} {7}},\ \bibinfo {pages} {1155} (\bibinfo {year} {2023})},\ \Eprint {https://arxiv.org/abs/2201.06533} {arXiv:2201.06533} \BibitemShut {NoStop}%
\bibitem [{\citenamefont {Mann}\ and\ \citenamefont {Minko}(2024)}]{mann2024algorithmic}%
  \BibitemOpen
  \bibfield  {author} {\bibinfo {author} {\bibfnamefont {R.~L.}\ \bibnamefont {Mann}}\ and\ \bibinfo {author} {\bibfnamefont {R.~M.}\ \bibnamefont {Minko}},\ }\href {https://doi.org/10.1103/PRXQuantum.5.010305} {\bibfield  {journal} {\bibinfo  {journal} {PRX Quantum}\ }\textbf {\bibinfo {volume} {5}},\ \bibinfo {pages} {010305} (\bibinfo {year} {2024})},\ \Eprint {https://arxiv.org/abs/2306.08974} {arXiv:2306.08974} \BibitemShut {NoStop}%
\bibitem [{\citenamefont {Koteck{\'y}}\ and\ \citenamefont {Preiss}(1986)}]{kotecky1986cluster}%
  \BibitemOpen
  \bibfield  {author} {\bibinfo {author} {\bibfnamefont {R.}~\bibnamefont {Koteck{\'y}}}\ and\ \bibinfo {author} {\bibfnamefont {D.}~\bibnamefont {Preiss}},\ }\href {https://doi.org/10.1007/bf01211762} {\bibfield  {journal} {\bibinfo  {journal} {Communications in Mathematical Physics}\ }\textbf {\bibinfo {volume} {103}},\ \bibinfo {pages} {491} (\bibinfo {year} {1986})}\BibitemShut {NoStop}%
\bibitem [{\citenamefont {Bravyi}\ \emph {et~al.}(2008)\citenamefont {Bravyi}, \citenamefont {DiVincenzo},\ and\ \citenamefont {Loss}}]{bravyi2008polynomial}%
  \BibitemOpen
  \bibfield  {author} {\bibinfo {author} {\bibfnamefont {S.}~\bibnamefont {Bravyi}}, \bibinfo {author} {\bibfnamefont {D.}~\bibnamefont {DiVincenzo}},\ and\ \bibinfo {author} {\bibfnamefont {D.}~\bibnamefont {Loss}},\ }\href {https://doi.org/10.1007/s00220-008-0574-6} {\bibfield  {journal} {\bibinfo  {journal} {Communications in Mathematical Physics}\ }\textbf {\bibinfo {volume} {284}},\ \bibinfo {pages} {481} (\bibinfo {year} {2008})},\ \Eprint {https://arxiv.org/abs/0707.1894} {arXiv:0707.1894} \BibitemShut {NoStop}%
\bibitem [{\citenamefont {Chen}\ \emph {et~al.}(2025)\citenamefont {Chen}, \citenamefont {Rouz{\'e}}, \citenamefont {Chen}, \citenamefont {Jiang}, \citenamefont {Scalet}, \citenamefont {Zhan}, \citenamefont {Chan}, \citenamefont {Ying},\ and\ \citenamefont {Tong}}]{chen2025convergence}%
  \BibitemOpen
  \bibfield  {author} {\bibinfo {author} {\bibfnamefont {H.}~\bibnamefont {Chen}}, \bibinfo {author} {\bibfnamefont {C.}~\bibnamefont {Rouz{\'e}}}, \bibinfo {author} {\bibfnamefont {J.}~\bibnamefont {Chen}}, \bibinfo {author} {\bibfnamefont {J.}~\bibnamefont {Jiang}}, \bibinfo {author} {\bibfnamefont {S.~O.}\ \bibnamefont {Scalet}}, \bibinfo {author} {\bibfnamefont {Y.}~\bibnamefont {Zhan}}, \bibinfo {author} {\bibfnamefont {G.~K.-L.}\ \bibnamefont {Chan}}, \bibinfo {author} {\bibfnamefont {L.}~\bibnamefont {Ying}},\ and\ \bibinfo {author} {\bibfnamefont {Y.}~\bibnamefont {Tong}},\ }\href@noop {} {\bibfield  {journal} {\bibinfo  {journal} {arXiv e-prints}\ } (\bibinfo {year} {2025})},\ \Eprint {https://arxiv.org/abs/2512.12010} {arXiv:2512.12010} \BibitemShut {NoStop}%
\bibitem [{\citenamefont {Tong}\ and\ \citenamefont {Zhan}(2025)}]{tong2025fast}%
  \BibitemOpen
  \bibfield  {author} {\bibinfo {author} {\bibfnamefont {Y.}~\bibnamefont {Tong}}\ and\ \bibinfo {author} {\bibfnamefont {Y.}~\bibnamefont {Zhan}},\ }\href {https://doi.org/10.1103/h1dx-ps5p} {\bibfield  {journal} {\bibinfo  {journal} {PRX Quantum}\ }\textbf {\bibinfo {volume} {6}},\ \bibinfo {pages} {030301} (\bibinfo {year} {2025})},\ \Eprint {https://arxiv.org/abs/2501.00443} {arXiv:2501.00443} \BibitemShut {NoStop}%
\bibitem [{\citenamefont {{\v{S}}m{\'\i}d}\ \emph {et~al.}(2025{\natexlab{a}})\citenamefont {{\v{S}}m{\'\i}d}, \citenamefont {Meister}, \citenamefont {Berta},\ and\ \citenamefont {Bondesan}}]{smid2025polynomial}%
  \BibitemOpen
  \bibfield  {author} {\bibinfo {author} {\bibfnamefont {{\v{S}}.}~\bibnamefont {{\v{S}}m{\'\i}d}}, \bibinfo {author} {\bibfnamefont {R.}~\bibnamefont {Meister}}, \bibinfo {author} {\bibfnamefont {M.}~\bibnamefont {Berta}},\ and\ \bibinfo {author} {\bibfnamefont {R.}~\bibnamefont {Bondesan}},\ }\href {https://doi.org/10.1038/s41467-025-65765-1} {\bibfield  {journal} {\bibinfo  {journal} {Nature Communications}\ }\textbf {\bibinfo {volume} {16}},\ \bibinfo {pages} {10736} (\bibinfo {year} {2025}{\natexlab{a}})},\ \Eprint {https://arxiv.org/abs/2501.01412} {arXiv:2501.01412} \BibitemShut {NoStop}%
\bibitem [{\citenamefont {Zhan}\ \emph {et~al.}(2026)\citenamefont {Zhan}, \citenamefont {Ding}, \citenamefont {Huhn}, \citenamefont {Gray}, \citenamefont {Preskill}, \citenamefont {Chan},\ and\ \citenamefont {Lin}}]{zhan2026rapid}%
  \BibitemOpen
  \bibfield  {author} {\bibinfo {author} {\bibfnamefont {Y.}~\bibnamefont {Zhan}}, \bibinfo {author} {\bibfnamefont {Z.}~\bibnamefont {Ding}}, \bibinfo {author} {\bibfnamefont {J.}~\bibnamefont {Huhn}}, \bibinfo {author} {\bibfnamefont {J.}~\bibnamefont {Gray}}, \bibinfo {author} {\bibfnamefont {J.}~\bibnamefont {Preskill}}, \bibinfo {author} {\bibfnamefont {G.~K.-L.}\ \bibnamefont {Chan}},\ and\ \bibinfo {author} {\bibfnamefont {L.}~\bibnamefont {Lin}},\ }\href {https://doi.org/10.1103/wzb3-dbg9} {\bibfield  {journal} {\bibinfo  {journal} {Physical Review X}\ }\textbf {\bibinfo {volume} {16}},\ \bibinfo {pages} {011004} (\bibinfo {year} {2026})},\ \Eprint {https://arxiv.org/abs/2503.15827} {arXiv:2503.15827} \BibitemShut {NoStop}%
\bibitem [{\citenamefont {{\v{S}}m{\'\i}d}\ \emph {et~al.}(2025{\natexlab{b}})\citenamefont {{\v{S}}m{\'\i}d}, \citenamefont {Meister}, \citenamefont {Berta},\ and\ \citenamefont {Bondesan}}]{smid2025rapid}%
  \BibitemOpen
  \bibfield  {author} {\bibinfo {author} {\bibfnamefont {{\v{S}}.}~\bibnamefont {{\v{S}}m{\'\i}d}}, \bibinfo {author} {\bibfnamefont {R.}~\bibnamefont {Meister}}, \bibinfo {author} {\bibfnamefont {M.}~\bibnamefont {Berta}},\ and\ \bibinfo {author} {\bibfnamefont {R.}~\bibnamefont {Bondesan}},\ }\href@noop {} {\bibfield  {journal} {\bibinfo  {journal} {arXiv e-prints}\ } (\bibinfo {year} {2025}{\natexlab{b}})},\ \Eprint {https://arxiv.org/abs/2510.04954} {arXiv:2510.04954} \BibitemShut {NoStop}%
\bibitem [{\citenamefont {Friedli}\ and\ \citenamefont {Velenik}(2017)}]{friedli2017statistical}%
  \BibitemOpen
  \bibfield  {author} {\bibinfo {author} {\bibfnamefont {S.}~\bibnamefont {Friedli}}\ and\ \bibinfo {author} {\bibfnamefont {Y.}~\bibnamefont {Velenik}},\ }\href {https://doi.org/10.1017/9781316882603} {\emph {\bibinfo {title} {Statistical Mechanics of Lattice Systems: A Concrete Mathematical Introduction}}}\ (\bibinfo  {publisher} {Cambridge University Press},\ \bibinfo {year} {2017})\BibitemShut {NoStop}%
\bibitem [{\citenamefont {Jerrum}\ \emph {et~al.}(1986)\citenamefont {Jerrum}, \citenamefont {Valiant},\ and\ \citenamefont {Vazirani}}]{jerrum1986random}%
  \BibitemOpen
  \bibfield  {author} {\bibinfo {author} {\bibfnamefont {M.~R.}\ \bibnamefont {Jerrum}}, \bibinfo {author} {\bibfnamefont {L.~G.}\ \bibnamefont {Valiant}},\ and\ \bibinfo {author} {\bibfnamefont {V.~V.}\ \bibnamefont {Vazirani}},\ }\href {https://doi.org/10.1016/0304-3975(86)90174-X} {\bibfield  {journal} {\bibinfo  {journal} {Theoretical Computer Science}\ }\textbf {\bibinfo {volume} {43}},\ \bibinfo {pages} {169} (\bibinfo {year} {1986})}\BibitemShut {NoStop}%
\bibitem [{\citenamefont {Sinclair}\ and\ \citenamefont {Jerrum}(1989)}]{sinclair1989approximate}%
  \BibitemOpen
  \bibfield  {author} {\bibinfo {author} {\bibfnamefont {A.}~\bibnamefont {Sinclair}}\ and\ \bibinfo {author} {\bibfnamefont {M.}~\bibnamefont {Jerrum}},\ }\href {https://doi.org/10.1016/0890-5401(89)90067-9} {\bibfield  {journal} {\bibinfo  {journal} {Information and Computation}\ }\textbf {\bibinfo {volume} {82}},\ \bibinfo {pages} {93} (\bibinfo {year} {1989})}\BibitemShut {NoStop}%
\bibitem [{\citenamefont {Yin}\ and\ \citenamefont {Lucas}(2023)}]{yin2023polynomial}%
  \BibitemOpen
  \bibfield  {author} {\bibinfo {author} {\bibfnamefont {C.}~\bibnamefont {Yin}}\ and\ \bibinfo {author} {\bibfnamefont {A.}~\bibnamefont {Lucas}},\ }\href@noop {} {\bibfield  {journal} {\bibinfo  {journal} {arXiv e-prints}\ } (\bibinfo {year} {2023})},\ \Eprint {https://arxiv.org/abs/2305.18514} {arXiv:2305.18514} \BibitemShut {NoStop}%
\bibitem [{\citenamefont {Bakshi}\ \emph {et~al.}(2024)\citenamefont {Bakshi}, \citenamefont {Liu}, \citenamefont {Moitra},\ and\ \citenamefont {Tang}}]{bakshi2024high}%
  \BibitemOpen
  \bibfield  {author} {\bibinfo {author} {\bibfnamefont {A.}~\bibnamefont {Bakshi}}, \bibinfo {author} {\bibfnamefont {A.}~\bibnamefont {Liu}}, \bibinfo {author} {\bibfnamefont {A.}~\bibnamefont {Moitra}},\ and\ \bibinfo {author} {\bibfnamefont {E.}~\bibnamefont {Tang}},\ }in\ \href {https://doi.org/10.1109/FOCS61266.2024.00068} {\emph {\bibinfo {booktitle} {2024 IEEE 65th Annual Symposium on Foundations of Computer Science (FOCS)}}}\ (\bibinfo {organization} {IEEE},\ \bibinfo {year} {2024})\ pp.\ \bibinfo {pages} {1027--1036},\ \Eprint {https://arxiv.org/abs/2403.16850} {arXiv:2403.16850} \BibitemShut {NoStop}%
\bibitem [{\citenamefont {Ramkumar}\ \emph {et~al.}(2025)\citenamefont {Ramkumar}, \citenamefont {Cai}, \citenamefont {Tong},\ and\ \citenamefont {Jiang}}]{ramkumar2025high}%
  \BibitemOpen
  \bibfield  {author} {\bibinfo {author} {\bibfnamefont {A.}~\bibnamefont {Ramkumar}}, \bibinfo {author} {\bibfnamefont {Y.}~\bibnamefont {Cai}}, \bibinfo {author} {\bibfnamefont {Y.}~\bibnamefont {Tong}},\ and\ \bibinfo {author} {\bibfnamefont {J.}~\bibnamefont {Jiang}},\ }\href@noop {} {\bibfield  {journal} {\bibinfo  {journal} {arXiv e-prints}\ } (\bibinfo {year} {2025})},\ \Eprint {https://arxiv.org/abs/2505.09730} {arXiv:2505.09730} \BibitemShut {NoStop}%
\bibitem [{\citenamefont {Graham}\ \emph {et~al.}(1995)\citenamefont {Graham}, \citenamefont {Gr\"otschel},\ and\ \citenamefont {Lov\'asz}}]{graham1995handbook}%
  \BibitemOpen
  \bibfield  {author} {\bibinfo {author} {\bibfnamefont {R.~L.}\ \bibnamefont {Graham}}, \bibinfo {author} {\bibfnamefont {M.}~\bibnamefont {Gr\"otschel}},\ and\ \bibinfo {author} {\bibfnamefont {L.}~\bibnamefont {Lov\'asz}},\ }\href@noop {} {\emph {\bibinfo {title} {Handbook of Combinatorics}}},\ Vol.~\bibinfo {volume} {2}\ (\bibinfo  {publisher} {Elsevier},\ \bibinfo {year} {1995})\BibitemShut {NoStop}%
\end{thebibliography}%

\end{document}